\newtheorem{theorem}{Theorem}[section]
\newtheorem{example}[theorem]{Example}
\newtheorem{corollary}[theorem]{Corollary}
\newtheorem{proposition}[theorem]{Proposition}
\newcommand{\RDP}{\mbox{\rm RDP}}
\newcommand{\RIP}{\mbox{\rm RIP}}
\newcommand{\scom}{\stackrel{\mbox{\rm c}}{\longleftrightarrow}}
\begin{document}
\title[Representable Effect Algebras and Observables]{Representable Effect Algebras and Observables}
\author[Anatolij Dvure\v censkij]{Anatolij Dvure\v censkij$^{1,2}$}
\date{}
\maketitle

\begin{center}  \footnote{Keywords: Effect algebra, compatibility, strong-compatible, internal compatibility, monotone $\sigma$-completeness, homogeneous algebra, observable, block

 AMS classification: 81P15, 03G12, 03B50

The paper has been supported by the Slovak Research and Development Agency under the contract APVV-0178-11, the grant VEGA No. 2/0059/12 SAV and by
CZ.1.07/2.3.00/20.0051. }
Mathematical Institute,  Slovak Academy of Sciences,\\
\v Stef\'anikova 49, SK-814 73 Bratislava, Slovakia\\
$^2$ Depart. Algebra  Geom.,  Palack\'{y} Univer.\\
CZ-771 46 Olomouc, Czech Republic\\

E-mail: {\tt
dvurecen@mat.savba.sk}
\end{center}

\begin{abstract}
We introduce a class of monotone $\sigma$-complete effect algebras, called representable, which are  $\sigma$-homomorphic images of a class of monotone $\sigma$-complete effect algebras of functions taking values in the interval $[0,1]$ and with pointwise defined effect algebra operations. We exhibit different types of compatibilities and show their connection to representability. Finally, we study observables and show situations when information of an observable on all intervals of the form $(-\infty,t)$ gives full information about the observable.
\end{abstract}

\section{Introduction}

Effect algebras form an important family of quantum structures introduced by Foulis and Bennett \cite{FoBe} which model quantum mechanical events. They are partial algebras with the primary notion of addition, $+,$  such that
$a + b$ describes the disjunction of two mutually excluding events
$a$ and $b.$ A basic model of effect algebras is the system $\mathcal E(H)$ of Hermitian operators of a (real, complex or quaternionic) Hilbert space $H$ that are between the zero and the identity operator. Effect algebras generalize many quantum structures like Boolean algebras, orthomodular lattices and posets, orthoalgebras, and MV-algebras. Effect algebras combine in an algebraic way both sharp and fuzzy features of a measurement process in quantum mechanics, and during the last two decades they become intensively studied by many authors. For a comprehensive source of information about effect algebras we recommend \cite{DvPu}.

A basic notion of theory effect algebras is compatibility. It means that two elements $a$ and $b$ can be split into three mutually excluding elements $a_1,b_1,c$ such that $a=a_1+c,$ $b=b_1+c$ and $a_1+b_2+c$ is defined in the effect algebra. There are different types of compatibilities: strong, internal, f-compatibility etc, see \cite{Dvu, DvPu}. The Riesz Decomposition Property (RDP for short) says roughly speaking that any two decompositions of the same element have a common refinement. The orthodox effect algebra of Hilbert space quantum mechanics $\mathcal E(H)$ does not fulfill RDP, but it can be covered by monotone $\sigma$-complete effect subalgebras with RDP, see e.g. \cite{Pul1}.  A more general notion than effect algebras with RDP are homogeneous algebras introduced in \cite{Jen}.

A fundamental result by Loomis and Sikorski, the Loomis-Sikorski Theorem, see e.g. \cite{Sik}, says that every Boolean $\sigma$-algebra is a $\sigma$-homomorphic image of a $\sigma$-algebra of subsets. It was generalized also for $\sigma$-complete MV-algebras \cite{Mun, LoomD} and for monotone $\sigma$-complete effect algebras with the Riesz Decomposition Property, \cite{BCD}, showing that every such an algebra is a $\sigma$-homomorphic image of a special family of $[0,1]$-valued functions, called a tribe and an effect-tribe, respectively, where the basic corresponding algebraic operations among functions are defined by points.

The aim of the present paper is the study of relation between strong compatibility or internal compatibility and possibility to cover a monotone $\sigma$-complete effect algebra by a system of $\sigma$-complete MV-algebras or by a system of effect algebras with RDP; such effect algebras are said to be weakly representable. These questions were studied also in \cite{CIJTP, Jen, Pul1, Pul2, NiPa}. The first such a kind of results was established in \cite{Rie} where it was shown that every lattice ordered effect algebra can be covered by blocks, and every such a block is an MV-algebra.

An observable models a quantum mechanical measurement. Mathematically it is a $\sigma$-homomorphism from the Borel $\sigma$-algebra into a monotone $\sigma$-complete effect algebra. Therefore, we study also a question when an observable can be uniquely determined knowing only  its values for all intervals of the form $(-\infty,t).$ We extend known results from \cite{DvKu}.

The paper is organized as follows. Section 2 gathers the basic facts on effect algebras and MV-algebras. Different types of compatibilities are studied in Section 3. We present also a result when a monotone $\sigma$-complete effect algebra can be covered by a system of $\sigma$-complete MV-algebras. More details on representable or weakly representable effect algebras will be done in Section 4. Finally, Section 5 shows situations when a partial information on an observable will imply the whole information on the observable.

\section{Basic Facts on Effect Algebras}

According to \cite{FoBe}, an {\it effect algebra} is  a partial algebra $E =
(E;+,0,1)$ with a partially defined operation $+$ and with two constant
elements $0$ and $1$  such that, for all $a,b,c \in E$,
\begin{enumerate}

\item[(i)] $a+b$ is defined in $E$ if and only if $b+a$ is defined, and in
such a case $a+b = b+a;$

 \item[(ii)] $a+b$ and $(a+b)+c$ are defined if and
only if $b+c$ and $a+(b+c)$ are defined, and in such a case $(a+b)+c
= a+(b+c);$

 \item[(iii)] for any $a \in E$, there exists a unique
element $a' \in E$ such that $a+a'=1;$

 \item[(iv)] if $a+1$ is defined in $E$, then $a=0.$
\end{enumerate}

If we define $a \le b$ if and only if there exists an element $c \in
E$ such that $a+c = b$, then $\le$ is a partial ordering on $E$, and
we write $c:=b-a.$ It is clear that $a' = 1 - a$ for any $a \in E.$ For more information about effect algebras, see \cite{DvPu}.  A {\it homomorphism}  from an effect algebra $E_1$ into another one $E_2$ is any mapping $h: E_1 \to E_2$ such that  (i) $h(1)=1$ (ii) if $a+b$ is defined in $E_1$ so is defined $h(a)+h(b)$ in $E_2$ and $h(a+b)= h(a)+h(b).$ A subset $F$ of an effect algebra $E$ is an {\it effect subalgebra} of $E$ if (i) $0,1 \in F,$ (ii) if $a\in F,$ then $a'\in F,$ and (iii) if $a,b\in F$ and $a+b \in E,$ then $a+b \in F.$

Two important families of effect algebras are the following.
(1) Let $E$ be a system of fuzzy sets on $\Omega,$ that is $E \subseteq [0,1]^\Omega,$ such that
(i) $1 \in E$, (ii) $f \in E$ implies $1-f \in E$, and (iii) if $f,g
\in E$ and $f(\omega) \le 1 -g(\omega)$ for any $\omega \in \Omega$,
then $f+g \in E$. Then $E$ is an effect algebra of fuzzy sets which
is not necessarily a Boolean algebra.

(2) If $G$ is an Abelian partially ordered group written additively, $u \in G^+:=\{g \in G \colon g \ge 0\}$, set $\Gamma(G,u):=[0,u]=\{g \in G: 0 \le g \le u\}.$ Then $(\Gamma(G,u);+,0,u)$ is an effect algebra where $+$ is the group addition of elements from $\Gamma(G,u)$  if it exists in $\Gamma(G,u).$ In particular, if $G =\mathbb R,$ the group of real numbers, then $[0,1]=\Gamma(\mathbb R,1)$ is the standard effect algebra of the real interval $[0,1].$ For more information on Abelian partially ordered groups see \cite{Goo}.

We say that an effect algebra $E$ satisfies the Riesz Decomposition Property (RDP for short) if for all $a_1,a_2,b_1,b_2 \in E$ such that $a_1 + a_2 = b_1+b_2,$ there are four elements $c_{11},c_{12},c_{21},c_{22}$ such that $a_1 = c_{11}+c_{12},$ $a_2= c_{21}+c_{22},$ $b_1= c_{11} + c_{21}$ and $b_2= c_{12}+c_{22}.$

We define $\sum_{i=1}^n a_i:= a_1+\cdots +a_n$, if the element on the right-hand exists in $E.$ A system of elements $(a_i: i \in I)$ is said to be {\it summable} if, for any finite set $F$ of $I,$ the element $a_F:= \sum_{i\in F} a_i$ is defined in $E.$ We define $\sum_{i \in \emptyset} a_i:=0.$  If there is an element $a:= \sup \{a_F:  F$ is a finite subset of $I\},$ we call it the {\it sum} of $\{a_i: i \in I\}$ and we write $a = \sum_{i \in I}a_i.$

An effect algebra $E$ is {\it monotone} $\sigma$-{\it complete} if, for any sequence $a_1 \le a_2\le \cdots,$ the element $a = \bigvee_n a_n$  is defined in $E$ (we write $\{a_n\}\nearrow a$).

If $E$ and $F$ are two monotone $\sigma$-complete effect algebras, a homomorphism $h:E \to F$ is said to be a $\sigma$-{\it homomorphism} if $\{a_n\} \nearrow a$ implies $\{h(a_n)\} \nearrow h(a),$ for $a, a_1,\ldots \in E.$

A {\it state} on an effect algebra is any mapping $s: E \to [0,1]$ such that $s(1)=1$ and $s(a+b)=s(a)+s(b)$ whenever $a+b$ is defined in $E.$ We note that it can happen that an effect algebra is stateless. However, if $E$ has RDP, then it possesses at least one state. A state $s$ is $\sigma$-additive if $s(a) = \lim_n s(a_n)$ whenever $\{a_n\}\nearrow a.$

An important family of effect algebras is the family of
MV-algebras introduced by Chang \cite{Cha}.

An MV-algebra is an algebra $M = (M;\oplus, ^*,0,1)$
of type $\langle 2,1,0,0\rangle$ such that, for all $a,b,c \in M$, we have

\begin{enumerate}
\item[(i)]  $a \oplus  b = b \oplus a$;
\item[(ii)] $(a\oplus b)\oplus c = a \oplus (b \oplus c)$;
\item[(iii)] $a\oplus 0 = a;$
\item[(iv)] $a\oplus 1= 1;$
\item[(v)] $(a^*)^* = a;$
\item[(vi)] $a\oplus a^* =1;$
\item[(vii)] $0^* = 1;$
\item[(viii)] $(a^*\oplus b)^*\oplus b=(a\oplus b^*)^*\oplus a.$
\end{enumerate}

If we define a partial operation $+$ on $M$ in such a way that $a+b$
is defined in $M$ if and only if $a \le b^*$ and  we set
$a+b:=a\oplus b$, then $(M;+,0,1)$ is an effect algebra with RDP.
For example, if $G$ is an Abelian lattice ordered group and $u \ge0,$ then
$(\Gamma(G,u); \oplus, ^*, 0,u)$, where  $\Gamma(G,u):= \{g\in G: 0\le g \le u\},$ $a\oplus b :=(a+b) \wedge u$, and $a^* := u-a$ $(a,b \in \Gamma(G,u))$ is an MV algebra, and every MV algebra arises in this way. Every MV-algebra is a distributive lattice. If it is also a $\sigma$-lattice, we call it a {\it $\sigma$-complete MV-algebra} ($\sigma$-MV-algebra for short).

The following two classes of $\sigma$-complete MV-algebras and monotone $\sigma$-complete effect algebras are those of $[0,1]$-valued functions with pointwise defined algebraic operations. They play an analogous role as $\sigma$-algebras do for $\sigma$-complete Boolean algebras in the Loomis-Sikorski Theorem, see \cite{Mun, LoomD} and \cite{BCD}.

(1) A {\it tribe} on $\Omega \ne \emptyset$
is a collection ${\mathcal T}$ of functions from $[0,1]^\Omega$ such
that (i) $1 \in {\mathcal T}$, (ii) if $f \in {\mathcal T}$, then $1 - f \in
{\mathcal T},$ and (iii) if $\{f_n\}$ is a sequence from ${\mathcal T}$,
then $\min \{\sum_{n=1}^\infty f_n,1 \}\in {\mathcal T}.$  A tribe is
always a $\sigma$-complete MV-algebra. For example if $f_n = \chi_{A_n},$ where $\chi_A$ is the characteristic function of the set $A,$  then  $\min \{\sum_{n=1}^\infty \chi_{A_n},1 \} = \chi_{\bigcup_n A_n}.$

(2) An {\it effect-tribe}  is any system ${\mathcal T}$ of $[0,1]$-valued functions on
$\Omega\ne \emptyset $ such that (i) $1 \in {\mathcal T}$, (ii) if $f
\in {\mathcal T},$ then $1-f \in {\mathcal T}$, (iii) if $f,g \in {\mathcal T}$,
$f \le 1-g$, then $f+g \in {\mathcal T},$ and (iv) for any sequence
$\{f_n\}$ of elements of ${\mathcal T}$ such that $f_n \nearrow f$
(pointwise), then $f \in {\mathcal T}$. Every
effect-tribe is a monotone $\sigma$-complete effect algebra.

We remind also the definition of an {\it orthomodular poset}, \cite{DvPu}: It is a poset $L$ with the least and greatest elements $0$ and $1,$ equipped with a mapping $a \mapsto a^\bot$ such that for all $a,b\in L$, we have (i) $a^{\bot\bot}=a$ (ii) $b^\bot \le a^\bot$ if $a \le b,$ (iii) if $a \le b^\bot$, $a\vee b \in L,$ (iv) $a \vee a^\bot = 1,$ and (v) if $a\le b$ implies $b = a\vee (a \vee b^\bot)^\bot$ $ (=  a \vee (b \wedge a^\bot)).$ If we set $a+b=a\vee b$ whenever $a \le b^\bot,$ then $(L;+,0,1)$ is an effect algebra. An orthomodular poset $L$ is  $\sigma$-{\it orthocomplete} if, for any sequence $\{a_n\}$ such that $a_n \le a_m^\bot$ for $n\ne m,$  $\bigvee_n a_n$ exists in $L.$ Then $L$ is also a monotone $\sigma$-complete effect algebra.

Finally we present one construction of effect algebras using a given family of effect algebras.
Let $((E_t;+_t, 0,1)\colon t \in T)$ be a system of effect algebras such that $E_s\cap E_t=\{0,1\}$ for all $s,t \in T,$ $s\ne t.$ Then $E=\bigcup_{t \in T}E_t,$ with a partially defined operation $+$ defined by $a+b=a +_t b$ iff $a,b \in E_t$ and $a+_t b$ exists in $E_t$, is an effect algebra called the {\it horizontal sum} of $(E_t\colon t \in T).$

\section{Compatibilities}

In this section, we define some types of compatible elements or subsets, respectively. We show that the strong compatibility may implies that a monotone $\sigma$-complete effect algebra can be covered by a system of blocks, and every such a block will be a $\sigma$-MV-algebra or an effect subalgebra with RDP, respectively.

We say that an effect algebra $E$ satisfies (i) the
{\it Riesz Interpolation Property} (RIP for short) if, for all
$x_1,x_2,y_1,y_2$ in $E$,  $x_i \le y_j$ for all $i,j$ implies
there exists an element $z \in E$ such that $x_i \le y_j$ for all
$i,j,$ (ii) {\it countable Riesz Interpolation Property} ($\sigma$-RIP for short) if, for any two sequences $\{x_i\}$ and $\{y_j\}$ of elements of $E$ with $x_i \le y_j$ for all $i,j$, there is an element $z \in E$ such that $x_i\le y_j$ for all $i,j.$

We remind that (1) if $E$ is a lattice, then $E$ has trivially
\RIP; the converse is not true as wee see below (Example \ref{ex:3.3}). (2) $E$ satisfies
RDP iff $a \le b+c,$ then there are two elements $b_1,c_1\in E$ such that $a=b_1+c_1,$
\cite[Lem 1.7.5]{DvPu}. (3) RDP implies
RIP, but the converse is not true (e.g. if $E = \mathcal L(H)$, then $E$
is a complete lattice but RDP fails)). (4) For monotone $\sigma$-complete effect algebras RIP and $\sigma$-RIP are equivalent, see \cite[Prop 4.1]{Pul2}.

A more general notion than an effect algebra with RDP is the following notion \cite{Jen}: We say that an effect algebra $E$ is {\it homogeneous} if, whenever $a,b,c \in E$ are such that $a \le b+c$, $a\le (b+c)',$ there are $a_1,a_2 \in E$ such that $a_1\le b,$ $a_2 \le c$ and $a=a_1\oplus a_2.$ We notice that (1) every effect algebra with RDP is homogeneous, (2) every lattice effect algebra is homogeneous, (3) every homogeneous effect algebra can be covered by a system of effect subalgebras $(E_t\colon t \in T)$ of $E$ such that every $E_t$ satisfies RDP, \cite[Thm 3.1, Cor 3.13]{Jen}.

Two elements $a$ and $b$ of an effect algebra $E$ are said to be
(i) {\it compatible} and write $a \leftrightarrow b$ if there
exist three elements $a_1, b_1, c \in E$ such that $a= a_1 + c,$
$b= b_1 + c$ and $a_1 + b_1 + c \in E$, and (ii) {\it strongly
compatible} and we write $a\stackrel{\mbox{\rm
c}}{\longleftrightarrow} b$ if there are three elements $a_1, b_1,
c \in E$ such that $a = a_1 + c,$ $b = b_1 + c$, $a_1 \wedge  b_1
= 0$ and $a_1 +b _1 + c \in E.$

We recall that the basic facts on these compatibilities were proved in \cite[Section 2]{CIJTP}:

\begin{proposition}\label{pr:3.1}
Let $E$ be an effect algebras.

\begin{enumerate}
\item[{\rm (i)}] If $a\stackrel{\mbox{\rm
c}}{\longleftrightarrow} b$, then $a \leftrightarrow b.$

\item[{\rm (ii)}] $a\leftrightarrow b$ $ (a\stackrel{\mbox{\rm
c}}{\longleftrightarrow} b)$ implies $b \leftrightarrow a$
$(b\stackrel{\mbox{\rm c}}{\longleftrightarrow} a).$

\item[{\rm (iii)}] $0
\stackrel{\mbox{\rm c}}{\longleftrightarrow} a \stackrel{\mbox{\rm
c}}{\longleftrightarrow} 1.$

\item[{\rm (iv)}] If $a \le b$, then
$a\stackrel{\mbox{\rm c}}{\longleftrightarrow} b$, $(b = (b-a) +
a, $ $a= 0 +a)$.

\item[{\rm (v)}] If $E$ satisfies \RDP, then $a \longleftrightarrow b$ for all $a,b\in E.$
\end{enumerate}

In addition, let $E$ satisfy \RIP.

\begin{enumerate}
\item[{\rm (vi)}] If $a = a_1 + c$, $b = b_1 +c$ with $a_1
\wedge b_1 = 0$ and $a_1 + b_1 +c \in E$, then $a\wedge b = c$, $a
\vee b = a_1 + b_1 +c.$

\item[{\rm (vii)}] $a \scom b$ if and only if $a \leftrightarrow b$ and $a\wedge b \in E$ if and only if $a \longleftrightarrow b$ and $a\vee b \in E.$

\item[{\rm (viii)}] $a \vee b, a\wedge b \in E$ and $(a \vee b) - b = a
-(a\wedge b).$

\item[{\rm (ix)}]    $a \vee b, a\wedge b \in E$ and $(a \vee b) - a = b
-(a\wedge b).$
\end{enumerate}
\end{proposition}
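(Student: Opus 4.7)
The plan is to verify the nine clauses in order. Clauses (i)--(v) are either definitional or immediate consequences of RDP; clauses (vi)--(ix) require RIP and carry the substantive content of the proposition.

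For (i) I would simply forget the condition $a_1\wedge b_1=0$ in a given strong-compatibility witness. For (ii) I would swap $a$ and $b$ in the decomposition and invoke commutativity of $+$. For (iii) I exhibit explicit witnesses: for $0\scom a$ take $(a_1,b_1,c)=(0,a,0)$ with $0\wedge a=0$; for $a\scom 1$ take $(0,a',a)$ with $0\wedge a'=0$ and $0+a'+a=1\in E$. For (iv) the decomposition $a=0+a$, $b=(b-a)+a$ has the meet condition $0\wedge(b-a)=0$ for free. For (v) I apply RDP to $a+a'=1=b+b'$ to obtain a refinement matrix $(c_{ij})$, then set $c:=c_{11}$, $a_1:=c_{12}$, $b_1:=c_{21}$; the sum $a_1+b_1+c$ equals $1-c_{22}$, hence lies in $E$.

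The technical heart is clause (vi), which I expect to be the main obstacle. Writing $d:=a_1+b_1+c\in E$, it is immediate that $a,b\le d$ and that $c\le a,b$, so $d$ is an upper bound and $c$ a lower bound of $\{a,b\}$. For $a\vee b=d$, I would take any upper bound $u$ of $\{a,b\}$ and use RIP to produce interpolants between $\{a_1,b_1\}$ (which sit below $u-c$) and the complements inside $u-c$, then exploit $a_1\wedge b_1=0$ to force $a_1+b_1\le u-c$ and hence $d\le u$. For $a\wedge b=c$, I would dually take a common lower bound $x$, interpolate between $x$ and $c$ inside both $a$ and $b$, and use the zero-meet hypothesis once again to annihilate the piece of $x$ not already covered by $c$. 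Making the RIP interpolation genuinely yield the zero-meet step is the delicate point.

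Clauses (vii)--(ix) then follow from (vi). For (vii), strong compatibility gives a meet by (vi); conversely, given ordinary compatibility $a=a'_1+c'$, $b=b'_1+c'$ together with the existence of $a\wedge b$, I replace the central element by $c:=a\wedge b$, set $a_1:=a-c$, $b_1:=b-c$, and use the difference $d:=c-c'\ge 0$ to rewrite $a'_1+b'_1+c'=(a_1+b_1+c)+d$, which forces $a_1+b_1+c\in E$; the identity $a_1\wedge b_1=0$ then holds because any common lower bound $y$ would satisfy $y+c\le a,b$, contradicting the maximality of $c=a\wedge b$. The equivalent version with $a\vee b$ is obtained by orthocomplement duality via $(a\vee b)'=a'\wedge b'$. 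Finally, (viii) and (ix) are direct calculations inside the strong decomposition: $(a\vee b)-b=(a_1+b_1+c)-(b_1+c)=a_1=(a_1+c)-c=a-(a\wedge b)$, and the symmetric identity holds by interchanging the roles of $a$ and $b$.
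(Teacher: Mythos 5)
The paper itself gives no proof of Proposition \ref{pr:3.1}: it is recalled from \cite[Section 2]{CIJTP}, so there is no internal argument to measure you against. Your proposal is correct and follows the standard route. Items (i)--(v) are exactly right, and your treatment of (vii)--(ix) is sound: the passage from $a\leftrightarrow b$ and $a\wedge b\in E$ to $a\scom b$ via $c:=a\wedge b$, $d:=c-c'$ and $a_1'+b_1'+c'=(a_1+b_1+c)+d$ is the intended computation, and the $a\vee b$ version does follow from the self-duality $a\scom b$ iff $a'\scom b'$ (with central element $(a_1+b_1+c)'$) together with $(a\vee b)'=a'\wedge b'$. Note only that (viii)--(ix) cannot hold for arbitrary $a,b$ under RIP alone (Example \ref{ex:3.3} is a non-lattice with RDP), so they carry the standing hypothesis $a\scom b$; your reading of them as identities inside the strong decomposition is the correct one.

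The one step you leave open is the join half of (vi), which you flag yourself; it does close, but not quite as you phrase it. For the meet, your interpolation $x,c\le z\le a,b$ gives $z-c\le a-c=a_1$ and $z-c\le b-c=b_1$, hence $z-c\le a_1\wedge b_1=0$ and $x\le z=c$. For the join, the interpolation should be taken not against ``complements inside $u-c$'' but against the pair $\{u-c,\,a_1+b_1\}$: since $a_1,b_1\le u-c$ and $a_1,b_1\le a_1+b_1$, RIP yields $z$ with $a_1,b_1\le z\le u-c$ and $z\le a_1+b_1$. Writing $w:=(a_1+b_1)-z$ and cancelling in $a_1+b_1=a_1+(z-a_1)+w=b_1+(z-b_1)+w$ gives $w\le b_1$ and $w\le a_1$, so $w\le a_1\wedge b_1=0$, whence $a_1+b_1=z\le u-c$ and $d=a_1+b_1+c\le u$. (Equivalently, and more economically, apply the already-proved meet statement to $a'=b_1+d'$, $b'=a_1+d'$, which exhibits $d'=a'\wedge b'$ and hence $d=a\vee b$.) With that step made precise, your argument is complete.
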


\begin{proposition}\label{pr:3.2}
Let a monotone $\sigma$-complete effect algebra $E$
satisfy {\rm RIP} and let, for a sequence $\{a_n\},$ we have $a_1\vee \cdots \vee a_n$ is defined in $E$ for every $n \ge 1.$ If $b \stackrel{\mbox{\rm
c}}{\longleftrightarrow} a_n$ for every $n \ge 1,$ then $a:=\bigvee_n a_n \in E,$  $b
\stackrel{\mbox{\rm c}}{\longleftrightarrow} a,$ and
$$
b \wedge \bigvee_n a_n = \bigvee_n (b\wedge a_n).
$$
\end{proposition}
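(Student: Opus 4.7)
Set $b_n := a_1 \vee \cdots \vee a_n$, an increasing sequence by hypothesis. My plan has four steps: (i) show $b \scom b_n$ for every $n$ by induction; (ii) infer $a := \bigvee_n b_n = \bigvee_n a_n$ from monotone $\sigma$-completeness; (iii) lift strong compatibility to the limit, $b \scom a$; and (iv) compute the meet, $b \wedge a = \bigvee_n (b \wedge a_n)$.

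For step (i), the base case is the hypothesis $b \scom a_1$. The inductive step rests on the auxiliary claim that if $u, v \scom b$ and $u \vee v \in E$, then $u \vee v \scom b$. By Proposition \ref{pr:3.1}(vii) this reduces to producing $(u \vee v) \vee b \in E$ and $u \vee v \leftrightarrow b$. Starting from the canonical decompositions $u = u_1 + (b\wedge u)$, $v = v_1 + (b\wedge v)$, $b = \bar u + (b\wedge u) = \bar v + (b\wedge v)$, with $u_1 \wedge \bar u = 0$ and $v_1 \wedge \bar v = 0$ guaranteed by strong compatibility and Proposition \ref{pr:3.1}(vi), a common middle term $c := (b\wedge u) \vee (b\wedge v)$ and the accompanying splitting of $u \vee v$ and $b$ around $c$ can be constructed using RIP; this is the crux.

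Given (i), step (ii) is immediate. For (iii), write $c_n := b \wedge b_n$ (which exists by Proposition \ref{pr:3.1}(vii)) and $d_n := b_n - c_n$; Proposition \ref{pr:3.1}(vi) then gives $b \vee b_n = d_n + (b - c_n) + c_n = d_n + b$, and cancellation shows both $\{c_n\}$ and $\{d_n\}$ are increasing, with sups $c$ and $d$. Using the fact that in any monotone $\sigma$-complete effect algebra $\sup_n(x_n + y) = (\sup_n x_n) + y$ whenever the sums exist, one verifies that $d + b$ exists and equals $b \vee a$, and that $a = d + c$. The decomposition $a = d + c$, $b = (b-c) + c$ together with $d + (b-c) + c = d + b \in E$ witnesses $b \leftrightarrow a$, and $b \vee a \in E$, so Proposition \ref{pr:3.1}(vii) yields $b \scom a$.

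For step (iv), the identification $b \wedge a = c$ follows from $d_n \wedge (b - c_n) = 0$ combined with a distributivity argument parallel to step (i); an inductive use of the identity $b \wedge (x \vee y) = (b \wedge x) \vee (b \wedge y)$ for $x, y \scom b$ with $x \vee y \in E$ then yields $c_n = \bigvee_{i=1}^n (b \wedge a_i)$, whence $b \wedge a = c = \bigvee_n (b \wedge a_n)$. The main obstacle is the closure-plus-distributivity lemma packaged into steps (i) and (iv): that strong compatibility with $b$ is stable under finite joins in $E$, and that $b \wedge (\cdot)$ distributes over them. This is where the RIP assumption is fully exploited; everything else reduces to routine manipulation in the monotone $\sigma$-complete setting.
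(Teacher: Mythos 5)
Your overall architecture is reasonable, but the proof has a genuine gap exactly where you flag ``the crux'': the auxiliary lemma of step (i) --- that $u,v\scom b$ and $u\vee v\in E$ imply $u\vee v\scom b$, together with the distributivity $b\wedge(u\vee v)=(b\wedge u)\vee(b\wedge v)$ used in step (iv) --- is asserted but never proved, and the construction you sketch for it does not go through as written. You set $c:=(b\wedge u)\vee(b\wedge v)$, but $E$ is only assumed to satisfy RIP, not to be a lattice, so this join need not exist; RIP only supplies \emph{some} interpolant $z$ with $b\wedge u,\,b\wedge v\le z\le b,\,u\vee v$. With such a $z$ one does get $(u\vee v)\leftrightarrow b$ easily (since $u,v\le (b-z)'$ forces $u\vee v\le(b-z)'$), but to invoke Proposition \ref{pr:3.1}(vi)--(vii) you must additionally prove $\bigl((u\vee v)-z\bigr)\wedge(b-z)=0$, which is the whole content of the lemma and requires an argument of the type ``$b-z=\bigwedge(b-(b\wedge u),\,b-(b\wedge v))$, hence any common lower bound of $(u\vee v)-z$ and $b-z$ is $0$.'' A second instance of the same gap occurs in step (iv): you pass from $d_n\wedge(b-c_n)=0$ to $d\wedge(b-c)=0$ in the limit, and meets do not commute with suprema for free; this step needs its own justification.

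For comparison, the paper's proof sidesteps the finite induction entirely. Since RIP and $\sigma$-RIP coincide for monotone $\sigma$-complete effect algebras (citing \cite[Prop 4.1]{Pul2}), it picks a single interpolant $b_0$ with $b\wedge a_n\le b_0\le a,b$ for \emph{all} $n$ at once, shows $a+(b-b_0)\in E$ (giving $a\leftrightarrow b$), then proves $\bigwedge_n\bigl(b-(b\wedge a_n)\bigr)=b-b_0$ and deduces $(b-b_0)\wedge(a-b_0)=0$ from it; Proposition \ref{pr:3.1}(vi) then delivers both $b\scom a$ and $a\wedge b=b_0$, and the identification $b_0=\bigvee_n(b\wedge a_n)$ follows by one more interpolation. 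If you fill your binary lemma by exactly this infimum-plus-zero-meet argument you will essentially have reproduced the paper's proof twice (once at the binary level, once at the limit), so the cleaner route is to work with the countable interpolant directly.
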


\begin{proof}
Since $E$ is monotone $\sigma$-complete, the element $a = \bigvee_n a_n$ is defined  in $E.$ By (vi) of Proposition
\ref{pr:3.2}, $b\wedge a_n \in E$ for every $n \ge 1.$ We have $ b\wedge
a_1, b\wedge a_2, \ldots  \le a,b$.  Due to \cite[Prop 4.1]{Pul2}, $E$ satisfies $\sigma$-\RIP. Hence, we can find an element $b_0 \in E$ such
that $b\wedge a_1, b\wedge a_2, \ldots  \le b_0 \le a,b.$

\vspace{2mm} \noindent  {\it Claim 1.} $b \longleftrightarrow a.$

It is clear that $a = (a-b_0) +b_0$ and $b = (b-b_0) + b_0.$ In addition, we have $a_n \le (b -(b \wedge a_n))' \le
(b-b_0)'$ for every integer  $n \ge 1,$ so that $a \le (b-b_0)'$ which gives $a +(b-b_0) \in E$ and $a +(b-b_0)=
(a-b_0) +(b-b_0) +b_0 \in E.$

\vspace{2mm}\noindent  {\it Claim 2.} $\bigwedge_n (b -(b\wedge a_n))  = b - b_0.$

We have $b -(b\wedge a_n) \ge b-b_0$ for any integer $n \ge 1.$ Let
$d \le b-(b\wedge a_n) $  for $n\ge 1.$ By (viii) of Proposition \ref{pr:3.1}, $d
\le b -(b \wedge a_n) = (b \vee a_n) - a_n$. Since $b\wedge a_n, b\vee a_n$ exist in $E,$ by the proof of Claim 1, we have
$d+a_n \le b \vee a_n \le (b-b_0) +a,$ so that $a_n \le
((b-b_0)+a) - d$ and $a \le ((b-b_0)+a)-d$ which gives $d+a \le
(b-b_0) +a$ and $d \le b-b_0.$

\vspace{2mm}\noindent  {\it Claim 3.} $(b - b_0) \wedge (a-b_0) =
0.$

Assume $z \le b - b_0$ and $z \le a-a_0$. Then $z + b_0 \le b$ and
$z +b_0 \le a.$ Moreover, $b \wedge a_n \le z +b_0 \le b,a$ for
$n\ge 1.$ Applying Claims 1--2 for the element $z+b_0$, we have $b-b_0
= b-(z+b_0)$, i.e., $z =0.$

\vspace{2mm}\noindent  {\it Claim 4.} $a \stackrel{\mbox{\rm
c}}{\longleftrightarrow} b$ and $a \vee b, a\wedge b \in E.$

It follows from Claim 3 and (vi) of Proposition \ref{pr:3.1}.

\vspace{2mm}\noindent  {\it Claim 5.} $b \wedge (\bigvee_n a_n) =
\bigvee_n (b\wedge a_n) .$

It is clear that $b \wedge a \ge b\wedge a_n$ for each $n \ge 1.$ Assume
$b\wedge a_n \le y$ for some $y \in E$ and each integer $n\ge 1.$ Then $b\wedge a_1, b\wedge
a_2, \ldots \le y, b_0$ so that there exists an element $y_0 \in E$ such
that $b\wedge a_n \le y_0 \le y,b_0$ for $n\ge 1.$ Then $b - y_0
\le b-(b\wedge a_n).$ By Claim 2, we have $b-y_0 \le
\bigwedge_n (b-(b\wedge a_n)) = b- b_0$, so that $b_0 \le
y_0 \le y$ which finishes the proof.
\end{proof}

We note that a maximal set of mutually strongly compatible
elements of $E$ is said to be a {\it block} of $E$. For example,
if $E$ is an MV-algebra, then it is a unique block of $E$.

According to \cite{CIJTP}, we say that an effect algebra $E$ satisfies the {\it
Difference-Meet Property} (DMP for short),  if $x\le y$, $x\wedge
z \in E$ and $y \wedge z \in E$ imply $(y-x)\wedge z \in E.$ For
example, (1) every lattice-ordered effect algebra satisfies the
difference-met property, (ii) $\mathcal L(H)$ is a monotone $\sigma$-complete effect algebra (in fact a complete lattice) with RIP and DMP, but RDP fails for it.

The following example is from \cite{CIJTP}. We notice that a poset $E$ is an {\it antilattice} if $a\wedge b$ ($a \vee b$) exists in $E$ iff $a$ and $b$ are comparable.

\begin{example}\label{ex:3.3}
Let $G$ be the additive group
$\mathbb R^2$ with the positive cone of all $(x,y)$ such that
either $x=y=0$ or $x>0$ and $y>0$. Then $u=(1,1)$ is a strong unit
for $G$. The effect algebra $E=\Gamma(G,u)$ is an antilattice
having  {\rm RDP} but $E$ is not  a lattice and {\rm DMP} fails.
\end{example}

\begin{proof}
Let $x = (0.2, 0.3) \le y=(0.3, 0.5)$ and $z =(0.01, 0.25).$ Then $x\wedge z$ and $y\wedge z$ are defined in $E$ but $(y-x)\wedge z$ fails in $E.$
\end{proof}

The following property was proved in \cite[Prop 3.1]{CIJTP}:

\begin{proposition}\label{pr:3.4}
Let an effect algebra $E$ satisfy \RDP and {\rm DMP}. Then:
\begin{enumerate}
\item[{\rm (i)}] If $a \stackrel{\mbox{\rm c}}{\longleftrightarrow} b$,
then $a \stackrel{\mbox{\rm c}}{\longleftrightarrow} b'.$

\item[{\rm (ii)}] If $a \stackrel{\mbox{\rm
c}}{\longleftrightarrow} b$, $a \stackrel{\mbox{\rm
c}}{\longleftrightarrow} c$ and $b \le c$, then $a
\stackrel{\mbox{\rm c}}{\longleftrightarrow} (c-b).$
\end{enumerate}
\end{proposition}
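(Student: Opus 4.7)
My plan is to reduce both claims to a uniform recipe based on Proposition~\ref{pr:3.1}(vii): since $E$ has \RDP, it automatically has \RIP, and \RDP forces every pair of elements to be compatible by Proposition~\ref{pr:3.1}(v). Therefore, to verify $a \scom x$ for some target $x$ it suffices to exhibit $a\wedge x$ in $E$. In both (i) and (ii), the existence of the relevant meet is exactly what \DMP\ supplies, once we have already secured two nearby meets from the hypotheses.

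For part (ii) this is the cleanest application and I would dispatch it first. Strong compatibility $a\scom b$ and $a\scom c$ together with Proposition~\ref{pr:3.1}(vii) give $a\wedge b\in E$ and $a\wedge c\in E$. Since $b\le c$, the \DMP\ hypothesis with $x:=b$, $y:=c$, $z:=a$ yields $(c-b)\wedge a\in E$. Combined with $a\leftrightarrow (c-b)$ from Proposition~\ref{pr:3.1}(v), Proposition~\ref{pr:3.1}(vii) returns $a\scom (c-b)$.

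Part (i) follows by the same trick applied at the top of $E$. From $a\scom b$ we have $b\wedge a\in E$, and trivially $1\wedge a=a\in E$. Using \DMP\ with $x:=b\le 1=:y$ and $z:=a$ gives $b'\wedge a=(1-b)\wedge a\in E$. Again invoking $a\leftrightarrow b'$ from \RDP\ and Proposition~\ref{pr:3.1}(vii), we conclude $a\scom b'$.

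No serious obstacle is expected: the proof is a bookkeeping exercise that packages \RDP\ (providing universal compatibility and \RIP) together with \DMP\ (providing the new meet) through the equivalence in Proposition~\ref{pr:3.1}(vii). The only thing worth flagging is that one should explicitly note the use of \RDP $\Rightarrow$ \RIP\ so that Proposition~\ref{pr:3.1}(vi)--(vii) are legitimately available; everything else is immediate.
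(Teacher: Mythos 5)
Your argument is correct: both parts follow from Proposition~\ref{pr:3.1}(v) and (vii) (legitimately available since RDP implies RIP) together with a single application of DMP, taking $x:=b$, $y:=c$, $z:=a$ for (ii) and $x:=b$, $y:=1$, $z:=a$ for (i). The paper itself gives no proof of this proposition, only a citation to \cite[Prop 3.1]{CIJTP}, but your derivation is the natural one from the tools stated in Section~3 and fills that gap cleanly.
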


\begin{theorem}\label{th:3.5}
Every block of a monotone $\sigma$-complete effect algebra $E$
with \RIP and {\rm DMP} is an effect subalgebra of $E$
which is a $\sigma$-complete MV-algebra. Moreover, any such  effect algebra is a
set-theoretical union of its blocks.
\end{theorem}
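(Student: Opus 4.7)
My plan is to prove the theorem in three parts. First, the ``union of blocks'' statement and closure of a block under the involution. For any $a\in E$, the three-element set $\{0,a,1\}$ is pairwise strongly compatible by parts (iii)--(iv) of Proposition~\ref{pr:3.1}, and the union of a chain of mutually strongly compatible sets is again mutually strongly compatible, so Zorn's lemma places every element of $E$ in some block. Now fix a block $M$ and take $a\in M$. For each $b\in M$, Proposition~\ref{pr:3.1}(vii) gives $a\wedge b\in E$; applying DMP to $a\le 1$ with $a\wedge b\in E$ and $1\wedge b=b\in E$ yields $a'\wedge b=(1-a)\wedge b\in E$. The decomposition $a=a_1+c$, $b=b_1+c$, $a_1+b_1+c\in E$ witnessing $a\scom b$ rewrites as $a'=(1-a_1-b_1-c)+b_1$ and $b=c+b_1$, witnessing $a'\leftrightarrow b$ with common part $b_1$; Proposition~\ref{pr:3.1}(vii) then gives $a'\scom b$, so maximality forces $a'\in M$.

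For closure under $\vee$, given $a,b\in M$ and any $d\in M$, I apply Proposition~\ref{pr:3.2} to the sequence $a,b,b,b,\ldots$: its finite joins all equal $a\vee b$, which exists in $E$ by Proposition~\ref{pr:3.1}(vii), and $d\scom a$, $d\scom b$ imply $d\scom a\vee b$; maximality gives $a\vee b\in M$, and closure under $\wedge$ follows via De Morgan. Thus $M$ is a bounded sublattice of $E$ closed under complementation, in which every pair of elements is compatible. It is a standard fact (see \cite{DvPu}) that a lattice-ordered effect algebra with this property has RDP and is an MV-algebra, so $M$ carries an MV-algebra structure. Closure of $M$ under the partial sum of $E$ then follows because, for $a,b\in M$ with $a+b\in E$, the strong-compatibility identity $a+b=(a\vee b)+(a\wedge b)$ expresses $a+b$ via operations computed inside $M$ (and coincides with the MV-sum $(a\vee b)\oplus(a\wedge b)$).

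The main obstacle is the passage from strong compatibility together with RIP and DMP to the MV-algebra structure on $M$, since one must verify that lattice operations and the partial sum inherited from $E$ agree with the abstract MV operations without assuming RDP on $E$ itself. Once this is settled, $\sigma$-completeness of $M$ is immediate from Proposition~\ref{pr:3.2}: for an increasing sequence $\{a_n\}\subseteq M$ each finite join equals $a_n\in M$, so for every $b\in M$ the proposition furnishes $\bigvee_n a_n\in E$ with $b\scom\bigvee_n a_n$, and maximality places $\bigvee_n a_n$ in $M$, finishing the proof.
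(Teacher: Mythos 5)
Your overall strategy tracks the paper's quite closely: Zorn's lemma for the covering statement, DMP together with Proposition~\ref{pr:3.1}(vii) for closure under complementation, Proposition~\ref{pr:3.2} (applied to eventually constant sequences and to increasing sequences) for closure under finite joins and countable suprema, and the identity $(x\vee y)-y=x-(x\wedge y)$ to recognize the MV-structure. Your re-derivation of $a\scom b\Rightarrow a'\scom b$ is correct; it is exactly part (i) of Proposition~\ref{pr:3.4}, which the paper simply cites.

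The genuine gap is the claim that $M$ is closed under the partial addition of $E$, i.e.\ that $M$ is an effect subalgebra. Reducing $a+b$ to $(a\vee b)+(a\wedge b)$ gains nothing: it merely replaces one pair of elements of $M$ whose sum is formed in $E$ by another such pair, and appealing to the MV-sum $\oplus$ on $M$ presupposes that $M$ already carries an effect-algebra structure under the restricted $+$ --- which is precisely the point at issue. Likewise, the ``standard fact'' that a lattice-ordered effect algebra in which all pairs are compatible is an MV-algebra cannot be invoked before $M$ is known to be an effect algebra. You explicitly name this as ``the main obstacle'' but never settle it. The missing ingredient is the second half of Proposition~\ref{pr:3.4}: if $a\scom b$, $a\scom c$ and $b\le c$, then $a\scom(c-b)$ (this is where DMP is used a second time, on differences rather than on complements). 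With it, for $b,c\in M$ with $b+c\in E$ one gets, for every $a\in M$, first $a\scom b'$, then $a\scom(b'-c)$ since $c\le b'$, and finally $a\scom(b'-c)'=b+c$; maximality then puts $b+c$ in $M$. This is exactly how the paper closes the argument, and without this step (or an equivalent) your proof of the effect-subalgebra claim, and hence of the MV-algebra structure on $M$, is incomplete.
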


\begin{proof} Let $M$ be a block of $E$. Then $0,1 \in M$. Assume
$a,b,c \in M$. By Proposition \ref{pr:3.4}, $a \stackrel{\mbox{\rm
c}}{\longleftrightarrow} b'$ so that $b ' \in M.$ If $b+c \in E$,
we have by Proposition \ref{pr:3.4}, $a \stackrel{\mbox{\rm
c}}{\longleftrightarrow} (b'-c)$, consequently $a
\stackrel{\mbox{\rm c}}{\longleftrightarrow}(b'-c)'= b+c$ which
proves that $M$ is an effect subalgebra of $E$. Moreover, if $x,y
\in M$, by Propositions \ref{pr:3.2}, $x\vee y \in M,$ consequently,
$x\wedge y =(x'\vee y')' \in M.$

In view of (vi)--(viii) of  Proposition \ref{pr:3.1}, we have $(x\vee y) -y =
x - (x\wedge y)$ for all $x,y \in M$ which is a necessary and
sufficient condition  in order $M$ to be an MV-algebra.

Let $\{x_n\}$ be a sequence of elements of $M.$ Then $a_n:=x_1\vee\cdots \vee x_n \in M$ and by the hypothesis, $x =\bigvee_n x_n$ is defined in $E,$ and $y \scom a_n$ for every $n\ge 1.$ By Proposition \ref{pr:3.2}, $x \scom y$ for every $y \in M.$ Hence, $x\in M$ which proves $M$ is a $\sigma$-complete MV-algebra.

If $A$ is any subset of mutually strongly compatible elements of
$E$, due to Zorn's lemma there exists a block of $E$ containing
$A$. In particular, by Proposition \ref{pr:3.1}, the set $A =
\{0,a,a',1\}$ $(a \in E)$ is a set of mutually strongly compatible
elements of $E$. This proves that $E$ is a set-theoretical union
of its blocks.
\end{proof}

\begin{corollary}\label{co:3.6}
Every $\sigma$-complete effect algebra $E$ can be covered by blocks and every block of $E$ is a $\sigma$-complete MV-algebra.
\end{corollary}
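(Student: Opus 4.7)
The plan is to observe that Corollary \ref{co:3.6} is essentially a direct consequence of Theorem \ref{th:3.5}, with the hypotheses RIP and DMP becoming automatic once we interpret ``$\sigma$-complete effect algebra'' in its standard lattice-theoretic sense (i.e., the $\sigma$-joins of arbitrary, not just monotone, countable families exist, so in particular $E$ is a lattice).

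First, I would verify the two structural properties required by Theorem \ref{th:3.5}. Since $E$ is a lattice, RIP holds trivially: given $x_1,x_2\le y_1,y_2,$ the element $z:=x_1\vee x_2$ (or equally $y_1\wedge y_2$) works. Moreover, the paper records in the paragraph following Proposition \ref{pr:3.2} that every lattice-ordered effect algebra satisfies DMP, so DMP is automatic as well. Finally, a $\sigma$-complete effect algebra is in particular monotone $\sigma$-complete, so the monotone $\sigma$-completeness hypothesis of Theorem \ref{th:3.5} is fulfilled.

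With all three hypotheses of Theorem \ref{th:3.5} in place, I would apply it directly: every block $M$ of $E$ is an effect subalgebra which is a $\sigma$-complete MV-algebra, and $E$ is the set-theoretical union of its blocks, which is precisely the statement that $E$ is covered by blocks.

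There is essentially no main obstacle here; the only subtlety is the reading of the term ``$\sigma$-complete effect algebra'' as meaning a $\sigma$-complete \emph{lattice} effect algebra, which is what makes RIP and DMP immediate. Once this is pointed out, the corollary is a one-line specialization of the preceding theorem.
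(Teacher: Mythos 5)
Your proposal is correct and follows exactly the paper's own route: the paper likewise observes that a $\sigma$-complete effect algebra is in particular a monotone $\sigma$-complete effect algebra satisfying RIP and DMP (both automatic for a lattice-ordered effect algebra), and then invokes Theorem \ref{th:3.5}. Your version merely spells out the justification of RIP and DMP in slightly more detail than the paper does.
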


\begin{proof}
Our effect algebra is in view of hypothesis a monotone $\sigma$-complete effect algebra with RIP and DMP. The desired result follows from Theorem \ref{th:3.5}.
\end{proof}

According to \cite{Jen}, we say that a finite non-empty set $M=\{a_1,\ldots,a_n\}$ of an effect algebra $E$ is {\it jointly compatible} if there is a finite sequence of summable elements $(c_1,\ldots,c_k)$ such that, for every $a_i,$ $i=1,\ldots,n,$ there is a finite subsequence $(c_{k_1},\ldots, c_{k_i})$ of $(c_1,\ldots,c_k)$ with $a_i = c_{k_1}+\cdots+c_{k_i}.$ An arbitrary subset $M$ of $E$ is {\it jointly compatible} if any its finite subset is jointly compatible.

We say that an arbitrary subset $M$ of $E$ is {\it internally compatible} if, for any finite subset $M_F$ of $M,$ there is a summable sequence $(c_1,\ldots,c_k)$  of elements of $M$ such that every element of $M_F$ can be expressed as a sum of some elements from $\{c_1,\ldots,c_k\}.$ We note that a two-element set $\{a,b\}$ is jointly compatible iff $a \leftrightarrow b.$

\begin{proposition}\label{pr:3.7}
Let $x$ be a  homomorphism from the Borel $\sigma$-algebra $\mathcal B(\mathbb R)$ into an effect algebra $E.$ Then the range of $x$, $\mathcal R(x):=\{x(A)\colon A \in \mathcal B(\mathbb R)\},$ is an internally compatible subset of $E.$

Conversely, let $\{a_1,\ldots,a_k\}$ be a finite subset of jointly compatible elements of $E.$ Then there is a homomorphism $x:\mathcal B(\mathbb R) \to E$ such that $\{a_1,\ldots,a_n\} \subseteq \mathcal R(x).$
\end{proposition}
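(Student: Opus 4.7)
The proposition has two directions, each amounting to an explicit construction. For the first, I would take an arbitrary finite subset $M_F=\{x(A_1),\ldots,x(A_n)\}$ of $\mathcal R(x)$, with $A_1,\ldots,A_n\in\mathcal B(\mathbb R)$, and pass to the $2^n$ atoms
$$D_\epsilon=\bigcap_{i=1}^n A_i^{\epsilon_i},\qquad \epsilon=(\epsilon_1,\ldots,\epsilon_n)\in\{0,1\}^n,$$
of the finite Boolean subalgebra of $\mathcal B(\mathbb R)$ generated by $A_1,\ldots,A_n$ (with $A_i^1=A_i$ and $A_i^0=\mathbb R\setminus A_i$). These are pairwise disjoint Borel sets whose union is $\mathbb R$, and each $A_i$ is the disjoint union of those $D_\epsilon$ with $\epsilon_i=1$. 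Setting $c_\epsilon:=x(D_\epsilon)$ and applying the homomorphism property finitely many times yields $\sum_\epsilon c_\epsilon=x(\mathbb R)=1$ and $x(A_i)=\sum_{\epsilon_i=1}c_\epsilon$, so the $c_\epsilon$'s form a summable family inside $\mathcal R(x)$ through which every member of $M_F$ factors; this is exactly internal compatibility of $\mathcal R(x)$.

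For the converse, joint compatibility of $\{a_1,\ldots,a_k\}$ supplies a summable sequence $(c_1,\ldots,c_m)$ in $E$ such that each $a_i$ is the sum of a sub-selection of the $c_j$'s. I would first extend the list by $c_0:=(c_1+\cdots+c_m)'$, so that $c_0+c_1+\cdots+c_m=1$, fix $m+1$ distinct points $t_0,t_1,\ldots,t_m\in\mathbb R$, and set
$$x(A):=\sum_{j\,:\,t_j\in A}c_j,\qquad A\in\mathcal B(\mathbb R),$$
with the empty sum equal to $0$. Then $x(\mathbb R)=1$, and for disjoint Borel sets $A,B$ the index sets $\{j:t_j\in A\}$ and $\{j:t_j\in B\}$ are disjoint, so $x(A)+x(B)$ is defined in $E$ and equals $x(A\cup B)$; hence $x$ is a homomorphism. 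Finally, writing $a_i=c_{k_1}+\cdots+c_{k_{r_i}}$, the finite (hence Borel) set $A_i:=\{t_{k_1},\ldots,t_{k_{r_i}}\}$ satisfies $x(A_i)=a_i$, so $\{a_1,\ldots,a_k\}\subseteq\mathcal R(x)$.

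There is no genuine obstacle; both halves are constructive and rest on standard manipulations of finite Boolean partitions together with summability. The step deserving the most care is the verification that the map $x$ defined via point-evaluations really is a homomorphism: this relies on summability of the extended list $(c_0,c_1,\ldots,c_m)$, which guarantees that every partial sum arising from the finite index set exists in $E$, and on the elementary observation that disjointness of Borel sets transfers to disjointness of the associated index sets. Once that is in place, the atoms $D_\epsilon$ (respectively the points $t_j$) supply exactly the preimages needed.
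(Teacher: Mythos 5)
Your proposal is correct and follows essentially the same route as the paper: the first half uses the same $2^n$ atoms $A_1^{i_1}\cap\cdots\cap A_n^{i_n}$ of the generated finite Boolean algebra, and the second half is the paper's construction verbatim except that you place the blocks $c_0,\ldots,c_m$ at arbitrary distinct real points rather than at the integers $0,1,\ldots,k$. Your added care in checking summability of the extended list and that the $c_\epsilon$ lie in $\mathcal R(x)$ (as internal compatibility requires) is a welcome explicitness, but it is not a departure from the paper's argument.
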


\begin{proof}
(1) Let $x(A_1),\ldots,x(A_n)$ with $A_1,\ldots, A_n \in \mathcal B(\mathbb R)$ be given. We define Borel sets $A(i_1,\ldots,i_n) = A^{i_1}_1\cap \cdots \cap A^{i_n}_n,$ where $i_1,\ldots,i_n \in \{0,1\}$ and $A^1=A$ and $A^0= \mathbb R \setminus A.$ Then every $x(A_i)$ can be expressed as a finite sum of elements from the finite system $(x(A(i_1,\ldots,i_n))\colon i_1,\ldots, i_n \in \{0,1\}).$

(2)  Suppose that $a_1,\ldots,a_n$ are jointly compatible elements of $E$. There is a finite summable sequence $(c_1,\ldots,c_k)$ such that every $a_i$ is a sum of some elements the sequence $(c_1,\ldots,c_n).$ We set $c_0=(c_1+\cdots+c_k)'$ and define a homomorphism $x:\mathcal B(\mathbb R)\to E$ as follows

$$
x(A):= \sum\{c_i\colon i \in A\cap \{0,1,\ldots,k\}\},\quad A \in \mathcal B(\mathbb R).
$$
Then $\mathcal R(x)$ contains the elements $\{a_1,\ldots,a_n\}.$
\end{proof}

A subset $M$ of an effect algebra $E$ is said to be (i) an {\it ic-block} if $1 \in M$ and $M$ is a maximal internally compatible subset of $E,$ and (ii) an {\it RDP-block} if $M$ is a maximal effect subalgebra with RDP. Since the subset $\{0,1\}$ is both internally compatible as well as an effect subalgebra with RDP, by Zorn's Lemma, an ic-block as well as an RDP-block exist in any effect algebra.

By \cite[Thm 3.11]{Jen}, for a homogeneous effect algebra $E$ a subset $M$ of $E$ is an ic-block iff it is an RDP-block. In addition, $E$ can be covered by all its ic-blocks as well as by all its RDP-blocks.

\section{Representable Effect Algebras}

In this section we study representable monotone $\sigma$-complete effect algebras that is, those that are $\sigma$-homomorphic images of some effect-tribes or even of tribes.

We say that a monotone $\sigma$-complete effect algebra $E$ is (i) {\it representable} if there are a non-empty set $\Omega,$  an effect-tribe $\mathcal T \subseteq [0,1]^\Omega$ and a $\sigma$-homomorphism $h$ from $\mathcal T$ onto $E,$
(ii) {\it weakly representable} if there is a system of monotone $\sigma$-complete effect subalgebras $(E_t \colon t \in T)$ of $E$  such that every $E_t$ is representable and $\bigcup_{t \in T} E_t=E,$ (iii) {\it MV-weakly representable} if there is a system of $\sigma$-complete MV-algebras $(E_t\colon t \in T)$ such that $E_t \subseteq E$ for each $t \in T$ and $\bigcup_{t \in T}E_t=E,$ and (iv) {\it RDP-weakly representable} if there is a system of $\sigma$-complete effect algebras with RDP $(E_t\colon t \in T)$ such that $E_t \subseteq E$ for each $t \in T$ and $\bigcup_{t \in T}E_t=E.$

It is clear that (1) if a monotone $\sigma$-complete effect algebra $E$ is representable (weakly representable), so is any $\sigma$-homomorphic image of $E.$ (2) If  $(E_t: t \in T)$ is a system of monotone $\sigma$-complete effect algebras such that each $E_t$ is representable (weakly representable), so is the horizontal sum of $(E_t\colon t\in T).$ (3)  If $(E_t \colon t \in T)$ is a system of monotone $\sigma$-complete effect algebras with RDP, then the direct product effect algebra $\prod_{t \in T}E_t$ with effect algebra operations defined by coordinates has RDP and is monotone $\sigma$-complete, consequently, it is representable. (4)  If $E$ is MV-weakly representable, then it is RDP-weakly representable.

We say that a system of states $\mathcal S$ of an effect algebra $E$ is {\it order-determining} if $a\le b$ for $a,b \in E$ iff $s(a)\le s(b)$ for any $s \in \mathcal S.$

\begin{theorem}\label{th:4.1}
Any of the following monotone $\sigma$-complete effect algebras is representable:

\begin{enumerate}
\item[{\rm (i)}] A $\sigma$-complete MV-algebra.

\item[{\rm (ii)}] A monotone $\sigma$-complete effect algebra with \RDP.

\item[{\rm (iii)}] A monotone $\sigma$-complete effect algebra with an order determining system of $\sigma$-additive states.

\item[{\rm (iv)}] $\mathcal E(H)$ and $\mathcal L(H).$

\end{enumerate}
\end{theorem}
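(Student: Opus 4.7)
The plan is to handle the four parts in sequence, with the bulk of the work concentrated in (iii). For (i), the Loomis-Sikorski theorem for $\sigma$-complete MV-algebras \cite{Mun, LoomD} presents any such algebra as a $\sigma$-homomorphic image of a tribe; since every tribe is an effect-tribe, representability is immediate. For (ii), I would invoke \cite{BCD}, the effect-algebraic Loomis-Sikorski theorem for monotone $\sigma$-complete effect algebras with RDP, which directly yields an effect-tribe presentation.

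The substantive work lies in (iii). Let $\mathcal S$ be an order-determining family of $\sigma$-additive states on $E$, and define $\varphi: E \to [0,1]^{\mathcal S}$ by $\varphi(a)(s) := s(a)$. First I would verify that $\varphi$ is an injective effect algebra homomorphism: injectivity follows from order-determination, while $a + b$ is defined in $E$ if and only if $\varphi(a) + \varphi(b) \le 1$ pointwise, because $a \le b'$ is equivalent to $s(a) \le 1 - s(b)$ for every $s \in \mathcal S$. Setting $\mathcal T := \varphi(E) \subseteq [0,1]^{\mathcal S}$, I would then show $\mathcal T$ is an effect-tribe: it contains the constant function $1$; closure under $f \mapsto 1-f$ follows from $s(a') = 1 - s(a)$; closure under pointwise addition (when bounded by $1$) follows from the preceding equivalence; and closure under pointwise increasing limits is the crux. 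For this last point, if $\varphi(a_n) \nearrow h$ pointwise, then $\{a_n\}$ is increasing in $E$ by order-determination, its supremum $a = \bigvee_n a_n$ exists by monotone $\sigma$-completeness, and $\sigma$-additivity of every $s \in \mathcal S$ yields $\varphi(a)(s) = \lim_n s(a_n) = h(s)$, so $h = \varphi(a) \in \mathcal T$. Finally, $\varphi^{-1}: \mathcal T \to E$ is a bijective $\sigma$-homomorphism onto $E$, witnessing representability.

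For (iv), I would derive both cases from (iii) using vector states $s_\psi(A) := \langle A\psi, \psi\rangle$ indexed by unit vectors $\psi \in H$. Order-determination is the classical fact that $A \le B$ for bounded self-adjoint operators if and only if $\langle A\psi, \psi\rangle \le \langle B\psi, \psi\rangle$ for all unit $\psi$, and $\sigma$-additivity follows from Vigier's theorem: a bounded monotone increasing sequence in $\mathcal E(H)$ converges in the strong operator topology to its supremum, so $\langle A_n \psi, \psi\rangle \to \langle A\psi, \psi\rangle$. The same vector states restricted to projections handle $\mathcal L(H)$.

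The main obstacle is the effect-tribe closure in (iii); the pointwise-limit step is precisely where all three hypotheses on $E$ (monotone $\sigma$-completeness, $\sigma$-additivity of every state, and order-determination) are simultaneously required to lift a pointwise supremum of $\varphi$-images back to a supremum inside $E$. Once this step is in place, parts (i), (ii), and (iv) fall out either from the existing Loomis-Sikorski literature or from (iii) applied to a concrete order-determining family of vector states.
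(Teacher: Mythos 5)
Your proposal follows essentially the same route as the paper: (i) and (ii) cite the same Loomis--Sikorski results of \cite{Mun, LoomD} and \cite{BCD}, (iii) embeds $E$ into $[0,1]^{\mathcal S}$ via $a\mapsto (s\mapsto s(a))$ and checks the effect-tribe axioms (the paper merely asserts this can be ``naturally organized,'' so your verification of the pointwise-limit closure is a welcome filling-in of detail), and (iv) reduces to (iii) via the vector states $s_\phi$ exactly as the paper does. The argument is correct.
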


\begin{proof}
(i) If $E$ is a $\sigma$-complete MV-algebra, by \cite{Mun, LoomD}, $E$ is always a $\sigma$-homomorphic image of a tribe which is a special kind of an effect-tribe.

(ii) Due to \cite{BCD}, every monotone $\sigma$-complete effect algebra with RDP is a $\sigma$-homomorphic image of an effect-tribe with RDP.

(iii)  Let $\mathcal S$ be an order-determining system of $\sigma$-additive states on $E.$ Given a $\sigma$-additive state $s\in \mathcal S,$ let $\hat s:E \to [0,1]$ be defined by $\hat s(a):= s(a),$ $a \in E.$ Then $\widehat {\mathcal S}:=\{\hat s\colon s \in \mathcal S\}$ can be naturally organized into an effect-tribe such that $\widehat {\mathcal S}$ is isomorphic to $E.$

(iv)  Let  $\phi \in H$ be a unit vector in the Hilbert space $H.$ It defines a $\sigma$-additive state $s_\phi$ on $\mathcal E(H)$ by $s_\phi(A):=(A\phi,\phi),$ $A \in \mathcal E(H).$ Then $\{s_\phi\colon \phi \in H\}$ is an order-determining system of $\sigma$-additive states on $\mathcal E(H).$ The desired result follows from (iii).

The case $\mathcal L(H)$ is identical.
\end{proof}

Let $M$ be a non-empty subset of a monotone $\sigma$-complete effect algebra $E$. We say that $M$ is {\it strongly monotone $\sigma$-complete} if, for any non-decreasing sequence $\{a_n\}$ of elements of $M$ such that $\bigvee_n a_n= a$ in $E,$ then $a \in M.$

\begin{theorem}\label{th:4.2}
Any of the following monotone $\sigma$-complete effect algebras is weakly representable:

\begin{enumerate}
\item[{\rm (i)}] A monotone $\sigma$-complete effect algebra with \RIP and {\rm DMP}.

\item[{\rm (ii)}] A $\sigma$-complete effect algebra.

\item[{\rm (iii)}] A $\sigma$-orthocomplete orthomodular poset.

\item[{\rm (iv)}] A monotone $\sigma$-complete homogeneous effect algebra such that every RDP-block is strongly monotone $\sigma$-complete.

\end{enumerate}
\end{theorem}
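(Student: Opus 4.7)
The plan is to verify weak representability in each of the four cases by exhibiting a cover of $E$ by monotone $\sigma$-complete effect subalgebras that are representable in the sense of Theorem \ref{th:4.1}. In every case the covering family will come from a block decomposition established earlier in the paper, and the task will be to confirm that the blocks are themselves monotone $\sigma$-complete and fall under one of the four clauses of Theorem \ref{th:4.1}.

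For case \textbf{(i)} I would simply invoke Theorem \ref{th:3.5}: under RIP and DMP, $E$ is the set-theoretical union of its blocks, and each block is an effect subalgebra of $E$ that is a $\sigma$-complete MV-algebra. By Theorem \ref{th:4.1}(i), every $\sigma$-complete MV-algebra is representable, so the family of blocks witnesses weak representability. Case \textbf{(ii)} reduces to (i): a $\sigma$-complete effect algebra is in particular a lattice effect algebra, hence satisfies RIP trivially, and as remarked after Example \ref{ex:3.3} every lattice-ordered effect algebra enjoys DMP. Therefore a $\sigma$-complete effect algebra is a monotone $\sigma$-complete effect algebra with RIP and DMP, and (i) applies.

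For case \textbf{(iii)}, a $\sigma$-orthocomplete orthomodular poset $L$ is, by the discussion in Section~2, a monotone $\sigma$-complete effect algebra in which compatibility reduces to the existence of an orthogonal triple $(a_1,b_1,c)$ with $a=a_1\vee c$, $b=b_1\vee c$. In this setting maximal mutually compatible subsets are Boolean subalgebras, and $L$ clearly satisfies RIP and DMP. Applying the same argument as in Theorem \ref{th:3.5}, together with the strongly-compatible sup lemma (Proposition \ref{pr:3.2}) specialised to the orthogonal case, shows that each block is closed under the existing countable sups of $L$. Hence every block of $L$ is a $\sigma$-complete Boolean algebra, which is in particular a $\sigma$-complete MV-algebra, and Theorem \ref{th:4.1}(i) yields representability of each block.

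For case \textbf{(iv)}, by Jen\v{c}a's theorem cited after Proposition \ref{pr:3.7}, a homogeneous effect algebra $E$ is covered by its RDP-blocks. Each RDP-block $B$ is an effect subalgebra of $E$ with RDP; the hypothesis that $B$ is strongly monotone $\sigma$-complete means that whenever $\{a_n\}\nearrow a$ in $E$ with $a_n\in B$, also $a\in B$. Hence $B$ is itself a monotone $\sigma$-complete effect algebra with RDP, and Theorem \ref{th:4.1}(ii) gives that $B$ is representable. The main obstacle is case (iii): while the block decomposition of an orthomodular poset into Boolean subalgebras is classical, verifying that these blocks actually inherit monotone $\sigma$-completeness from the ambient $\sigma$-orthocomplete structure requires reusing the orthogonality argument of Proposition \ref{pr:3.2}, since a priori the sup of an increasing sequence inside a block is only guaranteed to exist in $L$, not in the block itself.
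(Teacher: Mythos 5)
Your treatments of (i), (ii) and (iv) coincide with the paper's: (i) is Theorem \ref{th:3.5} plus Theorem \ref{th:4.1}(i); (ii) reduces to (i) exactly as in Corollary \ref{co:3.6} (a $\sigma$-complete effect algebra is a $\sigma$-lattice, hence has RIP and, being lattice-ordered, DMP); and (iv) is the covering by RDP-blocks from \cite[Thm 3.11]{Jen} together with Theorem \ref{th:4.1}(ii), the strong monotone $\sigma$-completeness hypothesis being used precisely so that each RDP-block is a monotone $\sigma$-complete effect algebra with RDP in its own right.

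Case (iii) is where your argument has a genuine gap. You assert that a $\sigma$-orthocomplete orthomodular poset ``clearly satisfies RIP and DMP'' and then route the proof through Theorem \ref{th:3.5} and Proposition \ref{pr:3.2}. This is unjustified and false in general: an orthomodular poset need not be a lattice and need not satisfy RIP, and both Proposition \ref{pr:3.2} and Theorem \ref{th:3.5} carry RIP as a standing hypothesis, so neither can be ``specialised to the orthogonal case'' without it. Indeed, if every $\sigma$-orthocomplete orthomodular poset had RIP and DMP, clause (iii) would be subsumed by clause (i) and would not appear separately in the theorem. The paper's actual route avoids RIP entirely: it invokes the classical fact (\cite[Sec 2.1.3]{Dvu}) that a $\sigma$-orthocomplete orthomodular poset is covered by Boolean $\sigma$-algebras --- maximal Boolean subalgebras, whose closure under the relevant countable orthogonal suprema comes from orthomodularity, not from interpolation --- and then applies the classical Loomis--Sikorski theorem: each Boolean $\sigma$-algebra is a $\sigma$-homomorphic image of a $\sigma$-algebra of sets, whose characteristic functions form an effect-tribe, so each Boolean block is representable. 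You correctly flagged closure of the blocks under countable suprema as the delicate point, but the tool you propose to settle it is unavailable here; the Boolean-block theory of orthomodular posets is what is needed instead.
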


\begin{proof}
(i) By Theorem \ref{th:3.5}, $E$ can be covered by a system of blocks $(E_t: t\in T)$ and each block $E_t$ is a $\sigma$-complete MV-algebra. By (i) of Theorem \ref{th:4.1}, each block is representable.

(ii) It follows from Corollary \ref{co:3.6} and (i).

(iii) If $E$ is a $\sigma$-orthocomplete orthomodular poset, by \cite[Sec 2.1.3]{Dvu}, $E$ can be covered by a system of Boolean $\sigma$-algebras. By the classical Loomis-Sikorski Theorem,  \cite[Thm 29.1]{Sik}, any Boolean $\sigma$-algebra is a $\sigma$-homomorphic image of a $\sigma$-algebra $B$ of subsets of a set $\Omega \ne \emptyset.$ If we take now the set of all characteristic functions of the sets from $B,$ we see that $B$ is representable. In addition, every $\sigma$-orthocomplete effect algebra can be covered by blocks.

(iv)  By \cite[Thm 3.11]{Jen}, $E$ can be covered by a system of RDP-blocks $(M_t:t \in T)$ and every such  RDP-block $M_t$ is an effect subalgebra of $E$ and $E_t$ satisfies RDP. Assuming $M_t$ is strongly monotone $\sigma$-complete, we see that $E$ is weakly representable.
\end{proof}

\begin{theorem}\label{th:4.3}
The following monotone $\sigma$-complete effect algebras are MV-weakly representable:
\begin{enumerate}
\item[{\rm (i)}] $\mathcal E(H)$ if $H$ is a separable complex Hilbert space.

\item[{\rm (ii)}] A monotone $\sigma$-complete effect algebra with \RIP and {\rm DMP}.

\item[{\rm (iii)}] A $\sigma$-complete effect algebra.

\item[{\rm (iv)}] A $\sigma$-orthocomplete orthomodular poset.
\end{enumerate}
\end{theorem}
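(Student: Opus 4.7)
The plan is to derive parts (ii)--(iv) from the block-cover theorems already established and from standard orthomodular poset theory, while part (i) will require a separate functional-analytic argument via maximal abelian self-adjoint subalgebras.

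Parts (ii) and (iii) are essentially immediate. Theorem \ref{th:3.5} yields (ii) directly: a monotone $\sigma$-complete effect algebra with RIP and DMP is the set-theoretical union of its blocks, each of which is an effect subalgebra that is itself a $\sigma$-complete MV-algebra, and this is precisely MV-weak representability. Part (iii) then follows via Corollary \ref{co:3.6}, noting that a lattice effect algebra automatically has RIP and DMP (the latter is the remark after Example \ref{ex:3.3}). For (iv), the plan is to invoke \cite[Sec 2.1.3]{Dvu}, according to which a $\sigma$-orthocomplete orthomodular poset is covered by its maximal Boolean sub-OMPs; each such sub-OMP inherits $\sigma$-orthocompleteness and, being Boolean, becomes a Boolean $\sigma$-algebra, hence a $\sigma$-complete MV-algebra (with $\oplus$ agreeing with $\vee$ on orthogonal pairs).

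For (i), I would fix $A \in \mathcal E(H)$ and, by Zorn's lemma, choose a maximal abelian self-adjoint subalgebra $\mathcal M$ of $\mathcal B(H)$ with $A \in \mathcal M$. Set $M_{\mathcal M} := \mathcal M \cap [0,I]$. One verifies directly that $M_{\mathcal M}$ is an effect subalgebra of $\mathcal E(H)$: it contains $0$ and $I$, is closed under $T \mapsto I - T$, and closed under the partial sum since $T_1, T_2 \in \mathcal M$ and $T_1 + T_2 \le I$ in $\mathcal E(H)$ forces $T_1 + T_2 \in \mathcal M \cap [0,I]$. Because $H$ is separable, $\mathcal M$ has separable predual and is $*$-isomorphic to some $L^\infty(X,\mu)$, under which $M_{\mathcal M}$ corresponds to the tribe of $[0,1]$-valued $L^\infty$-functions, the prototypical $\sigma$-complete MV-algebra. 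Since every $A \in \mathcal E(H)$ lies in some MASA, the family $\{M_{\mathcal M} \colon \mathcal M \text{ a MASA of } \mathcal B(H)\}$ covers $\mathcal E(H)$, giving the required MV-weak representability.

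The main obstacle is precisely in part (i): one must verify that the $\sigma$-complete MV-algebra structure that $M_{\mathcal M}$ inherits from the $L^\infty$-identification coincides with the one it inherits as a subset of $\mathcal E(H)$. That is, a sequence $\{T_n\} \subseteq M_{\mathcal M}$ with $T_n \nearrow T$ in $M_{\mathcal M}$ must have the same supremum $T$ in $\mathcal E(H)$. This relies on the standard fact that a bounded monotone increasing sequence of self-adjoint operators has its strong-operator-topology limit as its supremum in $\mathcal B(H)$, and that this SOT-limit belongs to every von Neumann algebra containing the sequence, in particular to $\mathcal M$. Separability of $H$ is used precisely to keep the spectral calculus and the $L^\infty(X,\mu)$-representation concrete (with $\mu$ semi-finite) and to ensure that countable monotone constructions stay inside $\mathcal M$.
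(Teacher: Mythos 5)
Your proposal is correct and follows essentially the same route as the paper: parts (ii)--(iv) come from the block covers of Theorem \ref{th:3.5}, Corollary \ref{co:3.6}, and the Boolean $\sigma$-algebra cover of a $\sigma$-orthocomplete orthomodular poset, while part (i) covers $\mathcal E(H)$ by the unit intervals of maximal Abelian von Neumann subalgebras. The only cosmetic difference is in (i), where the paper identifies such an interval as a $\sigma$-complete MV-algebra via von Neumann's single-generator theorem (functions of one element $a$) rather than via the $L^\infty(X,\mu)$ representation, and your explicit check that suprema of increasing sequences agree in $\mathcal M$ and in $\mathcal E(H)$ (via SOT limits) is a point the paper leaves implicit.
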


\begin{proof}
(i) According to von Neumann \cite{vNe}, every set of mutually commuting elements in $\mathcal E(H)$ can be embedded into a maximal Abelian von Neumann algebra $\mathcal A$, and every element of $\mathcal A$   is a real-valued function of some element $a$ in $\mathcal A.$ This can be by \cite{Pul1, Var} transformed into an MV-algebra (as functions of $a$) which is even  a $\sigma$-complete MV-algebra. Consequently, $\mathcal E(H)$ can be covered by a system of $\sigma$-complete MV-algebras.

(ii) It follows from (i) of Theorem \ref{th:4.1}.

(iii) It follows from (ii) of Theorem \ref{th:4.1}.

(iv)  By the proof of (iii) of Theorem \ref{th:4.1}, $E$ can be covered by a system of Boolean $\sigma$-algebras, and every Boolean algebra is in its own right an MV-algebra.
\end{proof}

\section{Observables and Representability of Effect Algebras}

This section studies observables and we show when partial information about an observable on all intervals of the form $(-\infty,t)$ implies the whole information about the observable.

Let $E$ be a monotone $\sigma$-complete effect algebra and let $\mathcal B(\mathbb R)$ be the Borel $\sigma$-algebra of the real line $\mathbb R.$ A mapping $x: \mathcal B(\mathbb R) \to E$ is said to be an {\it observable} on $E$ if (i) $x(\mathbb R)=1,$ (ii) if $A$ and $B$ are mutually disjoint Borel sets of $\mathbb R$, then $x(A \cup B)=x(A)+x(B),$ where $+$ is the partial addition on $E,$ and (iii) if $\{A_i\}$ is a sequence of Borel sets such that $A_i \subseteq A_{i+1}$ for every $i$ and $A= \bigcup_i A_i,$ then $x(A) = \bigvee_i x(A_i).$ In other words, an observable is a $\sigma$-homomorphism of monotone $\sigma$-complete effect algebras.

We notice that for all $A,B \in \mathcal B(\mathbb R)$, we have (i) $x(\mathbb R \setminus A) = x(A)',$ (ii) $x(\emptyset) = 0,$ (iii) if $A \subseteq B,$ then  $x(A)\le x(B)$   and $x(B \setminus A) = x(B) - x(A)$  (iv) if $B_i \supseteq B_{i+1}$ and $B = \bigcap_i B_i$, then $x(B) = \bigwedge _i x(B_i).$

We denote by $\mathcal R(x):=\{x(A): A \in \mathcal B(\mathbb R)\},$ the {\it range} of $x.$ Then the range is not necessarily  an effect subalgebra as well as not an MV-subalgebra of $M,$ see \cite{DvKu}.

\begin{proposition}\label{pr:5.1}
Let $x$ be an observable on a monotone $\sigma$-complete effect algebra $E$. Given a real number $t \in \mathbb R,$ we put

$$ x_t := x((-\infty, t)). \eqno(5.1)
$$
Then

$$ x_t \le x_s \quad {\rm if} \ t < s, \eqno (5.2)$$

$$\bigwedge_t x_t = 0,\quad \bigvee_t x_t =1, \eqno(5.3)
$$
and
$$ \bigvee_{t<s}x_t = x_s, \ s \in \mathbb R. \eqno(5.4)
$$
\end{proposition}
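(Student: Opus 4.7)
The three assertions are all immediate consequences of the monotonicity and $\sigma$-continuity of $x$ together with elementary manipulations of the sets $(-\infty,t)$. I will prove them in the order (5.2), (5.3), (5.4); each uses only facts already collected just before the statement, namely: if $A\subseteq B$ then $x(A)\le x(B)$; if $A_i\nearrow A$ then $x(A_i)\nearrow x(A)$; and if $B_i\searrow B$ then $x(B_i)\searrow x(B)$.

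For (5.2), observe simply that $t<s$ forces $(-\infty,t)\subseteq(-\infty,s)$, and apply monotonicity of $x$.

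For (5.3), the idea is to reduce the infimum/supremum over an uncountable index to a monotone countable one and invoke the $\sigma$-continuity of $x$. Choose the sequence $B_n=(-\infty,-n)$. It is decreasing with $\bigcap_n B_n=\emptyset$, so by property (iv) of observables recalled above, $\bigwedge_n x_{-n}=x(\emptyset)=0$ in $E$. Since $0\le x_t$ for every $t\in\mathbb R$ and already the subfamily $\{x_{-n}\}$ has infimum $0$, the larger family $\{x_t\colon t\in\mathbb R\}$ has infimum $0$ as well. Dually, take $A_n=(-\infty,n)$; this is an increasing sequence with $\bigcup_n A_n=\mathbb R$, so by the defining property (iii) of an observable, $\bigvee_n x_n=x(\mathbb R)=1$, and the same argument shows $\bigvee_{t\in\mathbb R}x_t=1$.

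For (5.4), fix $s\in\mathbb R$ and let $A_n=(-\infty,s-1/n)$. Then $A_n\nearrow(-\infty,s)$, hence $x_{s-1/n}\nearrow x_s$ in $E$. In particular $x_s=\bigvee_n x_{s-1/n}$. By (5.2) the element $x_s$ is an upper bound of the family $\{x_t\colon t<s\}$; and if $y\in E$ is any other upper bound, then $x_{s-1/n}\le y$ for all $n$, so $x_s=\bigvee_n x_{s-1/n}\le y$. Therefore $x_s$ is the least upper bound of $\{x_t\colon t<s\}$, which is exactly (5.4).

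There is no real obstacle here; the only small point to watch is that suprema and infima over uncountable families need not exist a priori in a merely monotone $\sigma$-complete effect algebra. This is handled in each case by exhibiting a countable monotone cofinal (or coinitial) subsequence whose sup (inf) exists by $\sigma$-continuity of $x$, and then checking that this candidate is actually the sup (inf) of the larger family.
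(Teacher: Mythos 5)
Your proof is correct and is precisely the standard argument the paper has in mind: the paper's own ``proof'' consists of the single sentence that the statement is straightforward, and your write-up supplies exactly those routine details (monotonicity for (5.2), countable monotone cofinal/coinitial subsequences plus $\sigma$-continuity for (5.3) and (5.4)). The care you take about uncountable suprema and infima not existing a priori in a merely monotone $\sigma$-complete effect algebra is the right point to watch and is handled correctly.
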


\begin{proof}
The statement is straightforward.
\end{proof}

The aim of the remainder of the paper is to exhibit situations when a system of elements $\{x_t: t \in \mathbb R\}$ of a monotone $\sigma$-complete effect algebra satisfying (5.2)--(5.4) imply existence of an observable $x$ on $E$ such that (5.1) holds for any $t \in \mathbb R.$ Many important situations were positively solved in \cite{DvKu} for monotone $\sigma$-complete effect algebras with RDP, $\mathcal E(H),$ $\sigma$-MV-algebras, $\sigma$-lattice effect algebras. Now we concentrate to monotone $\sigma$-complete effect algebras satisfying RIP and DMP, and to some kind of homogeneous effect algebras.

\begin{theorem}\label{th:5.2}
Let $E$ be a monotone $\sigma$-complete effect algebra $E$ with {\rm RDP} and {\rm DMP}. Let $(x_t: t \in \mathbb R)$ be a system of elements of $E$ satisfying {\rm (5.2)--(5.4)}, then there is a unique observable $x$ on $E$ for which $(4.1)$ holds for any $t \in \mathbb R.$
\end{theorem}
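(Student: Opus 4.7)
My plan is to reduce the problem to the already-solved case of $\sigma$-complete MV-algebras by confining everything to a single block. The essential observation is that the spectral family $\{x_t : t \in \mathbb{R}\}$ is totally ordered by (5.2), so by Proposition \ref{pr:3.1}(iv) the elements $x_t$ are pairwise strongly compatible. Hence the set $\{x_t : t \in \mathbb{R}\} \cup \{0,1\}$ is a mutually strongly compatible family, and by Zorn's lemma it is contained in a block $M$ of $E$. Since \RDP implies \RIP (as noted after Proposition \ref{pr:3.1}), Theorem \ref{th:3.5} applies, so $M$ is a $\sigma$-complete MV-algebra that is an effect subalgebra of $E$. From the proof of Theorem \ref{th:3.5} one further extracts that $M$ is closed under the countable joins of $E$, so the inclusion $\iota: M \hookrightarrow E$ is a $\sigma$-homomorphism.

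Next I would verify that the hypotheses (5.2)--(5.4) survive the passage to $M$. All three conditions involve only joins and meets of the $x_t$, and since $\{x_t\}$ is a chain each of these limits reduces to a countable monotone join or meet (e.g. $\bigvee_{t<s} x_t = \bigvee_n x_{s-1/n}$). Because $M$ is closed under such limits in $E$, any lower (upper) bound in $M$ is also one in $E$, and vice versa; consequently $0$ and $1$ remain the meet and join in (5.3), and $x_s$ remains the join in (5.4) when interpreted inside $M$. Thus $(x_t : t \in \mathbb{R})$ is a spectral family inside the $\sigma$-complete MV-algebra $M$.

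At this stage I would invoke the known result from \cite{DvKu}, which asserts precisely that a spectral family in a $\sigma$-complete MV-algebra gives rise to a unique observable with the prescribed values on half-lines. This yields an observable $x^M: \mathcal B(\mathbb R) \to M$ with $x^M((-\infty,t)) = x_t$ for all $t$. Composing with $\iota$, I obtain $x := \iota \circ x^M : \mathcal B(\mathbb R) \to E$; since $\iota$ is a $\sigma$-homomorphism of effect algebras, $x$ is itself a $\sigma$-homomorphism, hence an observable on $E$ satisfying $(5.1)$.

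For uniqueness I would use the standard Dynkin / monotone class argument directly in $E$: if $y$ is any observable on $E$ with $y((-\infty,t)) = x_t$ for all $t$, the family $\mathcal{D} := \{A \in \mathcal B(\mathbb R) : x(A) = y(A)\}$ is closed under complements (since $x(\mathbb R \setminus A) = 1 - x(A)$) and under countable increasing unions (by $\sigma$-additivity of both $x$ and $y$), and it contains the $\pi$-system $\{(-\infty,t) : t \in \mathbb R\}$, which is closed under finite intersections. By Dynkin's $\pi$--$\lambda$ theorem $\mathcal{D}$ contains $\sigma\bigl(\{(-\infty,t)\}\bigr) = \mathcal B(\mathbb R)$, so $x = y$. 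The main technical point of the whole argument is the block-closure step, i.e.\ verifying that the $\sigma$-MV-algebra $M$ absorbs the joins and meets of (5.3)--(5.4); once that is in place, the rest is a clean transfer of a known result.
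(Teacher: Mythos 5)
Your proposal is correct and follows essentially the same route as the paper: both reduce to a block $M$ containing the chain $\{x_t\}$, identify $M$ as a $\sigma$-complete MV-algebra via Theorem \ref{th:3.5}, and then invoke the construction of \cite{DvKu} (which the paper merely unpacks via the Loomis--Sikorski representation of $M$ by an effect-tribe), with uniqueness by a monotone-class/Dynkin argument. Your explicit check that the joins and meets in (5.3)--(5.4) and the $\sigma$-homomorphism property of the inclusion $M\hookrightarrow E$ transfer correctly is a welcome clarification of a point the paper leaves implicit, but it does not change the argument.
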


\begin{proof}
It is evident that $(x_t: t \in \mathbb R)$ is a system of mutually strongly compatible elements of $E.$ According to Theorem \ref{th:3.5}, there is a block $M$ of $E$ containing all elements of $\{x_t:t \in \mathbb R\}.$ In addition, this block is a $\sigma$-complete MV-algebra.
To prove the statement, we outline only the main steps of the proof; for more details see \cite[Thm 3.2, Thm 3.9]{DvKu}. According to Theorem \ref{th:4.1}(i), $M$ is representable, that is, there are an effect-tribe $\mathcal T \subseteq [0,1]^\Omega,$ $\Omega \ne \emptyset,$ a $\sigma$-homomorphism $h$ from $\mathcal T$ on $M.$

Let $r_1,r_2,\ldots$ be any enumeration of the set of rational numbers.   Given $r_n,$ let $a_n$ be a function from the tribe $\mathcal T$ such that $h(a_n)=x_{r_n}$ for any $n\ge 1.$ We are stating that it is possible to find such a sequence of functions $\{b_n\}$ from $\mathcal T$ such that $
h(b_n)=x_{r_n}$ for any $n\ge 1$ and $b_n \le b_m$ whenever $r_n < r_m.$ Indeed, if $n=1$, we set $b_1 = a_1.$ By mathematical induction suppose that we have find $b_1,\ldots,b_n$ such that $h(b_i)=x_{r_i},$ and $b_i \le b_j$ whenever $r_i < r_j$ for $i,j=1,\ldots, n.$ Let $j_1,\ldots,j_n$ be a permutation of $1,\ldots,n$ such that $r_{j_1}<\cdots<r_{j_n}.$  For $r_{n+1}$ we have three possibilities (i) $r_{n+1}< r_{j_1},$ (ii) there exists $k =1,\ldots,n-1$ such that $r_{j_k} < r_{n+1} < r_{j_{k+1}},$ or (iii) $r_{j_n} < r_{n+1}.$  Applying \cite[Lem 2.1]{DvKu}, we can find $b_{n+1}\in \mathcal T,$ $h(b_{n+1})=r_{n+1},$ such that for all $i,j =1,\ldots, n+1,$ $b_i \le b_j$ whenever $r_i < r_j.$

Due to the density of rational numbers, for any $t \in \mathbb R,$ we can find an element $b_t \in \mathcal T$ such that $h(b_t)=r_t.$

Fix $\omega \in \Omega$ and define $F_\omega(t):=b_t(\omega),$ $t \in \mathbb R.$ Then $F_\omega$ is a non-decreasing, left continuous functions with  $\lim_{t \to -\infty} F_\omega(t)=0$  and $\lim_{t \to \infty} F_\omega(t)=1.$ By \cite[Thm 43.2]{Hal}, $F_\omega$ is a distribution function on $\mathbb R$ corresponding to a unique probability measure $P_\omega$ on $\mathcal B(\mathbb R),$ that is, $P_\omega((-\infty,t))=F_\omega(t)$ for every $t \in \mathbb R.$ Define now a mapping $\xi: \mathcal B(\mathbb R) \to [0,1]^\Omega$ by $\xi(E)(\omega)=P_\omega(E),$ $E \in \mathcal B(\mathbb R)$, $\omega \in \Omega.$  It is possible to show that $\{E \in \mathcal B(\mathbb R)\colon \xi(E) \in \mathcal T\}= \mathcal B(\mathbb R).$

Therefore, $x:= h \circ \xi$ is an observable on $M$ such that $x((-\infty, t))=x_t$ for any $t \in \mathbb R.$ To prove the uniqueness of $x$, it is possible to show that $\{E \in \mathcal B(\mathbb R) \colon x(E)=\xi(E)\}=\mathcal B(\mathbb R).$
\end{proof}

\begin{theorem}\label{th:5.3}
Let $E$ be a monotone $\sigma$-complete homogeneous effect algebra $E$ such that any \RDP-block of $E$ is strongly monotone $\sigma$-complete. Let $(x_t: t \in \mathbb R)$ be a system of elements of $E$ satisfying {\rm (5.2)--(5.4)}, then there is a unique observable $x$ on $E$ for which $(4.1)$ holds for any $t \in \mathbb R.$
\end{theorem}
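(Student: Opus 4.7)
The plan is to reduce to Theorem \ref{th:5.2} by locating a strongly monotone $\sigma$-complete \RDP-block of $E$ that contains the whole chain $(x_t:t\in\mathbb R)$, and then replaying inside that block the lifting construction used there.

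First I would verify that the family $\{x_t:t\in\mathbb R\}$ is internally compatible in the sense of Section~3. Given any finite subset, arrange its indices so that $x_{t_1}\le x_{t_2}\le\cdots\le x_{t_n}$ and set $c_1:=x_{t_1}$, $c_i:=x_{t_i}-x_{t_{i-1}}$ for $2\le i\le n$, and $c_{n+1}:=1-x_{t_n}$. Then $(c_1,\ldots,c_{n+1})$ is a summable sequence in $E$ and every $x_{t_i}$ is the partial sum $c_1+\cdots+c_i$ of some of its entries. Hence $\{x_t:t\in\mathbb R\}$ is internally compatible, and Zorn's lemma places it inside some ic-block $M$. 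By \cite[Thm 3.11]{Jen}, in the homogeneous effect algebra $E$ every ic-block is an \RDP-block, so $M$ is an effect subalgebra of $E$ satisfying \RDP\ and containing every $x_t$.

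Next I would invoke the standing hypothesis that every \RDP-block is strongly monotone $\sigma$-complete. Since $M$ is an effect subalgebra of $E$ closed under $\sigma$-suprema computed in $E$, it is itself a monotone $\sigma$-complete effect algebra with \RDP\ whose $\sigma$-suprema coincide with those of $E$; in particular (5.2)--(5.4) persist verbatim inside $M$. By Theorem \ref{th:4.1}(ii), $M$ is then representable, so there exist a non-empty set $\Omega$, an effect-tribe $\mathcal T\subseteq[0,1]^\Omega$ with \RDP, and a surjective $\sigma$-homomorphism $h:\mathcal T\to M$.

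The remainder is a transcription of the proof of Theorem \ref{th:5.2} with $M$ playing the role of the MV-block and $\mathcal T$ that of the tribe. Enumerate the rationals $r_1,r_2,\ldots$; lift each $x_{r_n}$ to some $a_n\in\mathcal T$, and inductively modify the lifts, via \cite[Lem 2.1]{DvKu}, into $b_n\in\mathcal T$ with $h(b_n)=x_{r_n}$ and $b_n\le b_m$ whenever $r_n<r_m$. Extend by density of $\mathbb Q$ in $\mathbb R$ to obtain $b_t\in\mathcal T$ with $h(b_t)=x_t$ for every $t\in\mathbb R$. For each $\omega\in\Omega$, the map $t\mapsto b_t(\omega)$ is non-decreasing and left-continuous with limits $0$ and $1$ at $\mp\infty$, so it is the distribution function of a unique probability measure $P_\omega$ on $\mathcal B(\mathbb R)$. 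Setting $\xi(A)(\omega):=P_\omega(A)$ yields a $\sigma$-homomorphism $\xi:\mathcal B(\mathbb R)\to\mathcal T$, and $x:=h\circ\xi:\mathcal B(\mathbb R)\to M\subseteq E$ is the desired observable: it lands in $M$, hence in $E$; strong monotone $\sigma$-completeness of $M$ in $E$ forces $\sigma$-continuity in $M$ to be $\sigma$-continuity in $E$; and $x((-\infty,t))=h(b_t)=x_t$ by construction. Uniqueness is handled exactly as in Theorem \ref{th:5.2}: any observable $y$ on $E$ with $y((-\infty,t))=x_t$ agrees with $x$ on the $\pi$-system of intervals $(-\infty,t)$ and therefore on the generated $\sigma$-algebra $\mathcal B(\mathbb R)$.

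The only genuinely delicate point is the opening step, placing all of $\{x_t:t\in\mathbb R\}$ in one common \RDP-block. Pairwise strong compatibility follows for free from Proposition \ref{pr:3.1}(iv), but in a homogeneous non-\RDP\ effect algebra pairwise compatibility need not yield joint compatibility, so that alone will not suffice. The explicit summable refinement above upgrades pairwise to internal compatibility on every finite subchain, and the equivalence of ic-blocks and \RDP-blocks supplied by \cite[Thm 3.11]{Jen} then delivers the single block on which the Theorem \ref{th:4.1}(ii) representation, and thence the construction of Theorem \ref{th:5.2}, can be run.
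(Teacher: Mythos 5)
Your overall route is the same as the paper's: place the chain in an ic-block $M$ of the homogeneous algebra $E$, use \cite[Thm 3.11]{Jen} to identify $M$ with an \RDP-block, invoke the strong monotone $\sigma$-completeness hypothesis and Theorem \ref{th:4.1}(ii) to get representability of $M$, and then rerun the lifting construction of Theorem \ref{th:5.2}. That part is fine. But there is a genuine slip at exactly the step you yourself single out as delicate. You claim that $\{x_t:t\in\mathbb R\}$ is \emph{internally} compatible, exhibiting the refinement $c_1=x_{t_1}$, $c_i=x_{t_i}-x_{t_{i-1}}$, $c_{n+1}=1-x_{t_n}$. By the paper's definition, internal compatibility of a set $M$ requires the summable sequence $(c_1,\ldots,c_k)$ to consist of \emph{elements of $M$}; your $c_i$'s are differences of consecutive $x_t$'s and are in general not of the form $x_s$, so what you have actually verified is only \emph{joint} compatibility of $\{x_t\}$. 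This matters: the Zorn argument that produces an ic-block runs over internally compatible supersets, and a merely jointly compatible set need not sit inside any internally compatible one a priori, so "Zorn's lemma places it inside some ic-block $M$" does not follow from what you proved.

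The repair is exactly what the paper does: set $x_{-\infty}:=0$, $x_{\infty}:=1$ and enlarge the chain to the difference-closed set $M_0:=\{x_t-x_s\colon s\le t,\ s,t\in\mathbb R\cup\{-\infty,\infty\}\}$. For any finite subset of $M_0$ one orders all endpoints $u_1<\cdots<u_m$ involved and takes the consecutive differences $x_{u_{i+1}}-x_{u_i}$, which \emph{do} belong to $M_0$, are summable, and refine every member of the finite subset; hence $M_0$ is internally compatible, contains $1$ and all the $x_t$'s, and Zorn's lemma together with \cite[Thm 3.11]{Jen} then yields the ic-block (equivalently \RDP-block) $M\supseteq M_0$. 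With that one-line correction the rest of your argument goes through and coincides with the paper's proof.
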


\begin{proof}
Let $x_{-\infty}:=0$ and $x_\infty :=1.$
The set $M_0:=\{x_t-x_s \colon s\le t,\ s,t \in \mathbb R \cup\{-\infty\}\cup \{\infty\}\}$ is internally compatible, it contains all $x_t$'s, and $1 \in M_0.$ By \cite[Thm 3.11]{Jen}, there is an ic-block $M$ of $E$ that is also an RDP-block of $E$ such that $M_0 \subseteq M.$ By the assumption, $M$ is a strongly monotone $\sigma$-complete effect algebra with RDP. By (ii) of Theorem \ref{th:4.1}, $M$ is representable.  Using the methods of the proof of Theorem \ref{th:5.2}, see also \cite[Thm 3.9]{DvKu}, we can find a unique observable $x$ of $M,$ consequently of $E$ such that $x_t = x((-\infty,t)),$ $t \in \mathbb R.$
\end{proof}

We note that Theorem \ref{th:5.3} describes a special type of RDP-weakly representable monotone $\sigma$-complete effect algebras.





\end{document}